\documentclass[]{article}

% - packages -------------------------------------------------------------------------------------------------------------------------------------------------------------------

 	\usepackage[latin1]{inputenc}
 	\usepackage[T1]{fontenc}
	
	\usepackage{amsfonts}
 	\usepackage{amsmath}
 	\usepackage{amssymb}
  	\usepackage{amsthm}
  	\usepackage{color}
 	\usepackage[english]{babel}
	\usepackage{enumerate}
	\usepackage{enumitem}
 	\usepackage{latexsym} 	
  	\usepackage[margin=2.75cm]{geometry}
 	\usepackage{setspace}
 	\usepackage{url}
  \usepackage{authblk}
  \usepackage{hyperref}
  \usepackage{color} % For getting rid of those stupid boxes around the hyperlinks and making them a much more humble dark blue font colour
  \definecolor{darkblue}{rgb}{0.0,0.0,0.3}
  \hypersetup{colorlinks,breaklinks,linkcolor=darkblue,urlcolor=darkblue,anchorcolor=darkblue,citecolor=darkblue}

% - layout ---------------------------------------------------------------------------------------------------------------------------------------------------------------------
 	
 	\onehalfspace
	\parindent=0pt
	\parskip=5pt
	
	\numberwithin{equation}{section}
	
%	\smartqed

% - commands. ------------------------------------------------------------------------------------------------------------------------------------------------------------------

	%labels
	\newcommand{\rccs}{$RCCS$ }	

	%standard	
	\newcommand{\dsum}[3]{\displaystyle\sum_{{#1}}^{{#2}}{#3}}
	\newcommand{\set}[3]{\left\{{#3}\right\}_{{#1}}^{{#2}}}
	
	%conditional probabilities

	\newcommand{\prob}[1]{p\left({#1}\right)}
	\newcommand{\altprob}[1]{p'\left({#1}\right)}
	\newcommand{\cprob}[2]{p\left({#1}\left|{#2}\right.\right)}

	\newtheorem{definition}{Definition}
	\newtheorem{lemma}{Lemma}
	\newtheorem{proposition}{Proposition}
%	\newtheorem*{proof}{Proof}

% - titlepage ------------------------------------------------------------------------------------------------------------------------------------------------------------------

\title{Do Reichenbachian Common Cause Systems of Arbitrary Finite Size Exist?}

%\author[1]{\bf Claudio Mazzola\thanks{email: \href{mailto:c.mazzola@uq.edu.au}{c.mazzola@uq.edu.au}}}
%\author[1]{\bf Peter W. Evans\thanks{email: \href{mailto:p.evans@uq.edu.au}{p.evans@uq.edu.au}}}
%\affil[1]{\small{{\it School of Historical and Philosophical Inquiry}, University of Queensland}}

\author{}
\date{}

%-------------------------------------------------------------------------------------------------------------------------------------------------------------------------------
%-------------------------------------------------------------------------------------------------------------------------------------------------------------------------------

\begin{document}

\maketitle

\vspace{-48 pt}
\begin{center}
\large
Claudio Mazzola \\
\normalsize{School of Historical and Philosophical Inquiry \\ The University of Queensland \\ Forgan Smith Building (1), St. Lucia, QLD 4072, Australia \\ c.mazzola@uq.edu.au}
\end{center}

\begin{center}
\large
Peter Evans \\
\normalsize{School of Historical and Philosophical Inquiry \\ The University of Queensland \\ Forgan Smith Building (1), St. Lucia, QLD 4072, Australia \\ p.evans@uq.edu.au}
\end{center}

\abstract{\noindent The principle of common cause asserts that positive correlations between causally unrelated events ought to be explained through the action of some shared causal factors. Reichenbachian common cause systems are probabilistic structures aimed at accounting for cases where correlations of the aforesaid sort cannot be explained through the action of a single common cause. The existence of Reichenbachian common cause systems of arbitrary finite size for each pair of non-causally correlated events was allegedly demonstrated by Hofer-Szab\'{o} and R\'{e}dei in 2006. This paper shows that their proof is logically deficient, and we propose an improved proof.}

%-------------------------------------------------------------------------------------------------------------------------------------------------------------------------------

\section{Introduction}

%-------------------------------------------------------------------------------------------------------------------------------------------------------------------------------

Two random variables are positively correlated just in case their joint probability is greater than the product of the respective probabilities. Positive correlation does not imply causation, yet purely chancy positive correlations are hard to come about. So, whenever two random variables appear to be positively correlated and yet neither of them causes the other, some shared cause is likely to be at work, which could increase the probability of their joint occurrence, and in terms of which their correlation could consequently be explained.

This, in a nutshell, is the \emph{principle of the common cause}. Everyday examples of the principle abound: a physician receiving several patients who exhibit the same symptoms would likely attribute their condition to the same pathogen; a teacher presented with two almost identical research assignments would presumably infer that they had been copied from the same online source; and the simultaneous shutting down of all electrical appliances in a building is most likely attributed to a failure in their common power supply. Nor are instances of the principle hard to find in the empirical sciences. To wit, commonality of phenotypical traits amongst species is typically explained through some common ancestry; the distribution of iron filings along concentric patterns is explained through the action of a central magnetic field; and even the matching shapes of continents are explained through their detachment from a single original land.

Reichenbach \cite{reichenbach 1956} first gave formal shape to the principle by demanding that instances of non-causal positive correlations be attributed to the presence of a \emph{conjunctive fork}.\footnote{For alternative formalisations of the principle  see for instance \cite{salmon 1984} and \cite{mazzola 2013}. For a comprehensive review of the current status of the principle in probabilistic causal modelling, see \cite{hofer-szabo redei szabo 2013}.} Informally, conjunctive forks are elementary causal structures whereby a common cause increases the probability of two otherwise independent effects. Put formally:
\begin{definition}
Let $(\Omega, p)$ be a classical probability space with $\sigma$-algebra of random events $\Omega$ and probability measure $p$. For any three distinct $A, B, C\in \Omega$, the ordered triple $\langle A,B,C\rangle$ is a \emph{conjunctive fork} for the pair $(A,B)$ if and only if:
\begin{gather}
1\neq \prob{C} \neq 0 													\label{c-fork 0} \\
\cprob{A\wedge B}{C} - \cprob{A}{C}\cprob{B}{C}  =  0					 		  	\label{c-fork 1} \\
\cprob{A\wedge B}{\overline{C}} - \cprob{A}{\overline{C}}\cprob{B}{\overline{C}}  =  0						\label{c-fork 2} \\
\cprob{A}{C} - \cprob{A}{\overline{C}}  >  0											\label{c-fork 3} \\
\cprob{B}{C} - \cprob{B}{\overline{C}}  >  0.										\label{c-fork 4}
\end{gather}
\end{definition}

Conjunctive forks are meant to explain positive correlations in two ways. First, whenever a conjunctive fork of the form $\langle A,B,C\rangle$ exists, the pair $(A,B)$ can be easily demonstrated to be positively correlated:
\begin{proposition}
Let $(\Omega, p)$ be a classical probability space with $\sigma$-algebra of random events $\Omega$ and probability measure $p$. For any three distinct $A, B, C\in \Omega$, if $\langle A, B, C\rangle$ is a conjunctive fork, then:
\begin{equation}\label{equation: positive correlation}
\prob{A\wedge B} - \prob{A}\prob{B} > 0.
\end{equation}
\end{proposition}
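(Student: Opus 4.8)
The plan is to compute the difference $\prob{A\wedge B} - \prob{A}\prob{B}$ explicitly by conditioning on the partition $\{C, \overline{C}\}$, and then to factor the result so that each factor has a determinate sign fixed by the defining conditions of the fork. First I would use the law of total probability to express each of $\prob{A\wedge B}$, $\prob{A}$, and $\prob{B}$ as a weighted average over $C$ and $\overline{C}$, with weights $\prob{C}$ and $\prob{\overline{C}} = 1 - \prob{C}$. The screening-off conditions \eqref{c-fork 1} and \eqref{c-fork 2} are what make this tractable: they let me replace the conditional joint probabilities $\cprob{A\wedge B}{C}$ and $\cprob{A\wedge B}{\overline{C}}$ by the products $\cprob{A}{C}\cprob{B}{C}$ and $\cprob{A}{\overline{C}}\cprob{B}{\overline{C}}$, so that the entire difference becomes a polynomial in the four conditional probabilities together with $\prob{C}$.

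To manage the algebra I would abbreviate these five numbers and expand the product $\prob{A}\prob{B}$. Subtracting it from the expression for $\prob{A\wedge B}$, the terms quadratic in $\prob{C}$ combine with the linear terms so as to expose a common factor $\prob{C}\prob{\overline{C}}$, and the surviving bracket collapses into a product of two differences. This is the one step that demands genuine care; the payoff is the identity
\begin{equation}
\prob{A\wedge B} - \prob{A}\prob{B} = \prob{C}\prob{\overline{C}}\bigl[\cprob{A}{C} - \cprob{A}{\overline{C}}\bigr]\bigl[\cprob{B}{C} - \cprob{B}{\overline{C}}\bigr].
\end{equation}

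Once this identity is established the conclusion follows immediately from a sign count. Condition \eqref{c-fork 0} guarantees $0 < \prob{C} < 1$, so $\prob{C}\prob{\overline{C}} > 0$; the two bracketed differences are strictly positive by conditions \eqref{c-fork 3} and \eqref{c-fork 4} respectively. The right-hand side is therefore a product of three strictly positive quantities, which yields \eqref{equation: positive correlation}. I expect the factorisation in the second step to be the only real obstacle, since everything after it reduces to reading off signs from the hypotheses.
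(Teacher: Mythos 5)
Your proof is correct: the factorisation identity you derive, with screening-off conditions (\ref{c-fork 1})--(\ref{c-fork 2}) used to reduce $\prob{A\wedge B}$ to a weighted sum of products, then the sign count via (\ref{c-fork 0}), (\ref{c-fork 3}) and (\ref{c-fork 4}), is exactly the classical Reichenbach argument. The paper omits this proof as elementary, but your identity is precisely the $n=2$ special case of the paper's general correlation formula (\ref{equation: correlation special}), so your route coincides with the one the paper relies on.
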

Second, conditions (\ref{c-fork 1})--(\ref{c-fork 2}) in the definition of a conjunctive fork declare that the correlation in (\ref{equation: positive correlation}) should vanish conditional on the occurrence of the common cause. This means that such correlation is purely epiphenomenal, being a mere by-product of said cause.

Still, a conjunctive fork may not exist for each pair of causally independent but positively correlated events $(A,B)$ in the given probability space. The principle of the common cause can be preserved in the face of this limitation in two ways. On the one hand, one may look for conjunctive forks in a larger probability space \cite{hofer-szabo redei szabo 1999}; alternatively, one may attribute the correlation not to a single common cause, but rather to the collective action of a plurality of common causes. In \cite{hofer-szabo redei 2004} Hofer-Szab\'{o} and R\'{e}dei pursue the latter strategy, introducing what they called \emph{Reichenbachian common cause systems}:

\begin{definition}\label{definition: rccs}
Let $(\Omega, p)$ be a classical probability space with $\sigma$-algebra of random events $\Omega$ and probability measure $p$. For any $A,B\in \Omega$, a \emph{Reichenbachian common cause system} (RCCS) of size $n\geq 2$ for the pair $A,B$ is a partition $\set{i=1}{n}{C_i}$ of $\Omega$ such that:
\begin{align}
\prob{C_i} \neq 0								 	&	&	\phantom{12 pt}	(i = 1, ..., n)					\label{rccs 0}\\	
\cprob{A\wedge B}{C_i} - \cprob{A}{C_i}\cprob{B}{C_i} = 0               	&	& 	\phantom{12 pt}	(i = 1, ..., n)					\label{rccs 1}\\
[\cprob{A}{C_i}-\cprob{A}{C_j}][\cprob{B}{C_i}-\cprob{B}{C_j}]> 0      &	&	\phantom{12 pt}	(1, ..., n = i\neq j = 1, ..., n).	   	\label{rccs 2}
\end{align}
\end{definition}

$RCCS$s are claimed to generalise the notion of a conjunctive fork in two respects. On the one hand, conditions (\ref{rccs 0}), (\ref{rccs 1}) and (\ref{rccs 2}) are intended to generalise constraints (\ref{c-fork 0}), (\ref{c-fork 1})--(\ref{c-fork 2}) and (\ref{c-fork 3})--(\ref{c-fork 4}) respectively. On the other hand, Hofer-Szab\'{o} and R\'{e}dei show that every \rccs determines a positive correlation between the terms of the corresponding pair, thereby emulating the explanatory function of conjunctive forks.%
\footnote{Whether $RCCS$s are an accurate generalisation of conjunctive forks is controversial. See \cite{mazzola 2012} and \cite{stergiou 2015} for two opposing views on this matter.} %
But do $RCCS$s actually exist for each positively correlated pair of causally independent events?

Hofer-Szab\'{o} and R\'{e}dei claim that they do \cite{hofer-szabo redei 2006}. More precisely, they offer a proof to the effect that, for each non-strictly positively correlated pair of events $(A, B)$ in a classical probability space $(\Omega, p)$ and any integer number $n\geq 2$, some extension of $(\Omega, p)$ can be constructed whereby a \rccs of size $n$ can be found for said pair. Their proof, however, is invalid. The aim of this article is to show why it is, and to replace it with an amended one. Each of these goals will be pursued respectively in the following two sections.

%-------------------------------------------------------------------------------------------------------------------------------------------------------------------------------

\section{Diagnosis}

%-------------------------------------------------------------------------------------------------------------------------------------------------------------------------------

Hofer-Szab\'{o} and R\'{e}dei's proof is articulated into two major steps, which can be briefly summarised as follows:

\emph{Step 1}: Hofer-Szab\'{o} and R\'{e}dei notice that conditions (\ref{rccs 0})--(\ref{rccs 2}) and the axioms of classical probability theory jointly constrain the possible values that can be assigned to probabilities $\cprob{A}{C_{1}}$, ...., $\cprob{A}{C_{n}}$, $\cprob{B}{C_{1}}$, ...., $\cprob{B}{C_{n}}$, $\prob{C_{1}}$, ...., $\prob{C_{n}}$,  so that $\set{i=1}{n}{C_i}$ could be a \rccs of size $n$ for the pair $(A, B)$ in a probability space $(\Omega, p)$. They thereby demonstrate that some set of $3n$ real numbers satisfying such constraints can always be found, for each possible value of $n\geq 2$ and any correlated pair $(A, B)$.

\emph{Step 2}: On this basis, Hofer-Szab\'{o} and R\'{e}dei  show how, for each such $n$ and $(A, B)$, an extension ($\Omega', p'$) of $(\Omega, p)$ can be constructed so that $\Omega$ includes a partition $\set{i=1}{n}{C_i}$ and so that the values of $\cprob{A}{C_{1}}$, ...., $\cprob{A}{C_{n}}$, $\cprob{B}{C_{1}}$, ...., $\cprob{B}{C_{n}}$, $\prob{C_{1}}$, ...., $\prob{C_{n}}$ meet the constraints mentioned above. This, they conclude, suffices to prove that some extension of the assumed probability space can in general be found, which includes a \rccs of size $n$ for a given pair.

Step 1 is where the mistake with Hofer-Szab\'{o} and R\'{e}dei's demonstration lies, so we shall specifically focus on it. To facilitate our analysis, it will be convenient to further subdivide Step 1 into three sub-components.

\emph{Step 1a}: The proof begins by stating the set of conditions that a set $\set{n = 1}{n}{a_{i}, b_{i}, c_{i}}$ of $3n$ real numbers must satisfy so that $\set{i=1}{n}{C_i}$ be a \rccs of size $n$ for the pair $(A, B)$ in probability space $(\Omega, p)$. These conditions are:
\begin{gather}
\prob{A} =\dsum{i = 1}{n}{a_{i}c_{i}}																																							\label{equation: adm a}		\\
\prob{B}=\dsum{i = 1}{n}{b_{i}c_{i}}																																							\label{equation: adm b}		\\
\prob{A\wedge B} = \dsum{i = 1}{n}{a_{i}b_{i}c_{i}}																																				\label{equation: adm ab}		\\
1 = \dsum{i=1}{n}{c_{i}}		 																																							\label{equation: adm partition}	\\
0< [a_{i}-a_{j}][b_{i}-b_{j}]    																																\phantom{12 pt}	(1, ..., n = i\neq j = 1, ..., n)	\label{equation: adm rccs 2}	\\	
0 \leq a_{i}, b_{i} \leq 1  																																	\phantom{12 pt}	(i = 1, ..., n)				\label{equation: adm aibi}		\\
0 < c_{i} < 1 																																		\phantom{12 pt}	(i = 1, ..., n),			\label{equation: adm ci}	 	
\end{gather}
under identifications
\begin{gather}
\prob{C_{i}} = c_{i} 									\phantom{12 pt}	(i = 1, ..., n)					\label{equation: adm 1} \\
\cprob{A}{C_{i}} = a_{i} 									\phantom{12 pt}	(i = 1, ..., n)					\label{equation: adm 2} \\
\cprob{B}{C_{i}} = b_{i} 									\phantom{12 pt}	(i = 1, ..., n).				\label{equation: adm 3}
\end{gather}
Hofer-Szab\'{o} and R\'{e}dei call a set of $3n$ numbers satisfying the above constraints \emph{admissible} for the pair $(A,B)$.

\emph{Step 1b}: Next, Hofer-Szab\'{o} and R\'{e}dei point out that equalities (\ref{equation: adm a})--(\ref{equation: adm partition}) effectively restrict the number of independent parameters in each admissible set $\set{n = 1}{n}{a_{i}, b_{i}, c_{i}}$ to $3n- 4$. Specifically, they maintain, numbers $c_{n-1}$, $c_{n}$, $a_{n}$, and $b_{n}$ can be expressed as functions of the remaining  $3n- 4$ parameters as follows:
\setlength{\jot}{12  pt}
\begin{gather}
c_{n-1} = \frac{\gamma + \frac{1}{2}\dsum{j,k=1}{n-2}{c_{j}c_{k}[a_{j}-a_{k}][b_{j}-b_{k}]}- \dsum{k = 1}{n-2}{c_{k}[a-a_{k}][b-b_{k}]}+\gamma\dsum{k = 1}{n-2}{c_{k}}}{[a-a_{n-1}][b-b_{n-1}]+\gamma + \dsum{k= 1}{n-2}{[a_{n-1}-a_{k}][b_{n-1}-b_{k}]}} 				\label{equation: cn-1}	\\
c_{n} = 1-\dsum{i = 1}{n-1}{c_{i}}																																							\label{equation: cn}	\\
a_{n} = \frac{a+\dsum{k = 1}{n-2}{c_{k}[a_{n-1}-a_{k}]}-a_{n-1}}{1-\dsum{k = 1}{n-1}{c_{k}}}+a_{n-1}																													\label{equation: an}	\\
b_{n} = \frac{b+\dsum{k = 1}{n-2}{c_{k}[b_{n-1}-b_{k}]}-b_{n-1}}{1-\dsum{k = 1}{n-1}{c_{k}}}+b_{n-1}																													\label{equation: bn}	
\end{gather}
where, for notational convenience, we set:
\setlength{\jot}{6  pt}
\begin{gather}
a= \prob{A} \\
b= \prob{B} \\
\gamma= \prob{A\wedge B} - \prob{A}\prob{B}.
\end{gather}

\emph{Step 1c}: Finally, thanks to the above equivalences, it is demonstrated by induction on $n$ that a set $\set{i = 1}{n}{a_{i}, b_{i}, c_{i}}$ of admissible numbers for $(A, B)$ exists for each $n\geq 2$. More specifically it is argued that, for any such $n$, some set of $3n - 4$ parameters $\set{i = 1}{n-2}{a_{i}, a_{n-1}, b_{i}, b_{n-1}, c_{i}}$ can always be chosen so that, in virtue of (\ref{equation: adm rccs 2})--(\ref{equation: adm ci}) and (\ref{equation: cn-1})--(\ref{equation: bn}), the resulting set $\set{i = 1}{n}{a_{i}, b_{i}, c_{i}}$ be admissible for $(A, B)$.

Leaving aside some minor and relatively harmless mathematical glitches, the main problem with this part of the proof is logical. Hofer-Szab\'{o} and R\'{e}dei's argument in fact rests on the presupposition that \emph{if} numbers $\set{n = 1}{n}{a_{i}, b_{i}, c_{i}}$ meet the conditions (\ref{equation: adm a})--(\ref{equation: adm ci}) defining an admissible set for $(A, B)$, \emph{then} a partition satisfying identifications (\ref{equation: adm 1})--(\ref{equation: adm 3}) will thereby qualify as a $\rccs$ for the same pair (i.e., will satisfy (\ref{rccs 0})--(\ref{rccs 2})): quite evidently, if that were not the case, demonstrating the existence of admissible numbers could not prove anything about the existence of $RCCS$s. Hofer-Szab\'{o} and R\'{e}dei even make this assumption explicit when declaring that `admissible numbers have been chosen precisely so that [(\ref{rccs 0})--(\ref{rccs 2})] are satisfied' \cite[p. 755]{hofer-szabo redei 2006}. 
Put in other words, they submit that given (\ref{equation: adm 1})--(\ref{equation: adm 3}), conditions (\ref{equation: adm a})--(\ref{equation: adm ci}) jointly imply (\ref{rccs 0})--(\ref{rccs 2}). Nevertheless, and here lies the logical error with their proof, this is demonstrably false.

To verify this, let us hereafter grant (\ref{equation: adm 1})--(\ref{equation: adm 3}). This stipulation immediately turns (\ref{equation: adm aibi}) into a direct consequence  of the axioms of probability theory, whereas (\ref{equation: adm partition}) essentially demands that $\set{i = 1}{n}{C_{i}}$ be a partition of the assumed probability space, as indeed it is required from a $RCCS$. Having established this, (\ref{equation: adm a}) and (\ref{equation: adm b}) are then easily obtained from the theorem of total probability. This shows that all of the aforesaid conditions are actually independent from (\ref{rccs 0})--(\ref{rccs 2}), so the real job must be done  by the remaining three equalities. Now (\ref{equation: adm ci}) does effectively imply (\ref{rccs 0}), while (\ref{equation: adm rccs 2}) is openly equivalent to (\ref{rccs 2}). By exclusion, it follows that (\ref{equation: adm a})--(\ref{equation: adm ci}) jointly imply (\ref{rccs 0})--(\ref{rccs 2}), as Hofer-Szab\'{o} and R\'{e}dei's proof requires, if and only if (\ref{equation: adm ab}) logically implies (\ref{rccs 1}). However, this is not so.

To illustrate, let $X, Y\in \Omega$ be arbitrarily chosen from $(\Omega, p)$, and let $\set{i = 1}{n}{Z_{i}}\subseteq \Omega$ be a partition of that space. The theorem of total probability then produces the following, general formula:
\begin{equation}\label{equation: ab general}
\prob{X\wedge Y} = \dsum{i = i}{n}{\cprob{X}{Z_{i}}\cprob{Y}{Z_{i}}\prob{Z_i}} + \dsum{i = i}{n}{\prob{Z_{i}}[\cprob{X\wedge Y}{Z_{i}}-\cprob{X}{Z_{i}}\cprob{Y}{Z_{i}}]}.
\end{equation}
Now let $A = X$, $B = Y$ and $Z_{i} = C_{i}$ for $i = 1, ..., n$. To get (\ref{equation: adm ab}) from (\ref{equation: ab general}), it is then necessary and sufficient that:
\begin{equation}\label{equation: crirical condition for ab}
\dsum{i = 1}{n}{\prob{Z_{i}}[\cprob{X\wedge Y}{Z_{i}}-\cprob{X}{Z_{i}}\cprob{Y}{Z_{i}}]} = 0.
\end{equation}
Now, if (\ref{rccs 1}) is true, then surely (\ref{equation: crirical condition for ab}) is the case and (\ref{equation: adm ab}) follows as a result. Therefore, (\ref{rccs 1}) is clearly a sufficient condition for (\ref{equation: adm ab}), while conversely (\ref{equation: adm ab}) is certainly a necessary condition for (\ref{rccs 1}). Still, it is equally clear that  (\ref{rccs 1}) being the case is not the \emph{sole} circumstance whereby (\ref{equation: adm ab}) can be so obtained. Because each term of the form $\cprob{X\wedge Y}{Z_{i}}-\cprob{X}{Z_{i}}\cprob{Y}{Z_{i}}$ can perfectly well be positive or negative,  (\ref{equation: crirical condition for ab}) can in fact be satisfied even  in case not all such terms are equal to zero. This is easily checked by putting, for instance,
\begin{eqnarray*}
\prob{C_{1}} =  \frac{1}{2} & &  \prob{C_{2}} = \frac{1}{2}\\
\cprob{A}{C_{1}} = \frac{1}{4} & & \cprob{A}{C_{2}} = \frac{1}{8} \\
\cprob{B}{C_{1}} = \frac{1}{3} & & \cprob{B}{C_{2}} = \frac{1}{6}\\
\cprob{A\wedge B}{C_{1}} = \frac{1}{24} & & \cprob{A\wedge B}{C_{2}} = \frac{1}{16}
\end{eqnarray*}
where $n = 2$; (\ref{equation: crirical condition for ab}) is satisfied but (\ref{rccs 1}) is violated.

This proves that (\ref{rccs 1}) is not a necessary condition for (\ref{equation: adm ab}), so the latter does not logically imply the former. Contrary to what Hofer-Szab\'{o} and R\'{e}dei declare, admissible numbers have \emph{not} been chosen so that (\ref{rccs 0})--(\ref{rccs 2}) are satisfied; so proving the existence of admissible numbers cannot in any way contribute to demonstrating the existence of $RCCS$s.  Hofer-Szab\'{o} and R\'{e}dei's proof is consequently invalid.

Before proceeding further, it will be worth stopping briefly to consider a possible rejoinder. Looking back at Hofer-Szab\'{o} and R\'{e}dei's proof, a chain of two subsequent inferences can be seen at work. First, (\ref{equation: adm a})--(\ref{equation: adm partition}) are inferred from (\ref{equation: adm rccs 2})--(\ref{equation: adm ci}), (\ref{equation: cn-1})--(\ref{equation: bn}); and subsequently, (\ref{rccs 0})--(\ref{rccs 2}) are allegedly inferred from (\ref{equation: adm a})--(\ref{equation: adm ci}).\footnote{Notice that the logical direction of the proof is opposite to the direction in which it is presented.} We have shown that the latter inferences in the chain are invalid, but what if (\ref{rccs 0})--(\ref{rccs 2}) could be directly obtained from (\ref{equation: adm rccs 2})--(\ref{equation: adm ci}), without the mediation of  (\ref{equation: adm a})--(\ref{equation: adm partition})? If that were the case, then the crux of Hofer-Szab\'{o} and R\'{e}dei's proof could be salvaged after all.

A few elementary calculations would show that, in effect, (\ref{equation: cn}), (\ref{equation: an}) and (\ref{equation: bn}) are respectively equivalent to (\ref{equation: adm partition}), (\ref{equation: adm a}) and (\ref{equation: adm b}). Our problem therefore reduces to establishing whether (\ref{equation: cn-1}) does really imply (\ref{rccs 1}). On the face of it, this is not quite clear. Hofer-Szab\'{o} and R\'{e}dei seem to treat (\ref{equation: cn-1}) as equivalent to (\ref{equation: adm ab}), but this might be only because they think, erroneously, that the latter be equivalent to (\ref{rccs 1}). Indeed, they do not directly obtain (\ref{equation: cn-1})  from  (\ref{equation: adm ab}), but they derive it instead from this other formula:
\begin{equation}\label{equation: correlation special}
\prob{X\wedge Y} - \prob{X}\prob{Y} =  \frac{1}{2}\dsum{i,j = 1}{n}{\prob{Z_{i}}\prob{Z_{j}}[\cprob{X}{Z_{i}}-\cprob{X}{Z_{j}}][\cprob{Y}{Z_{i}}-\cprob{Y}{Z_{j}}]}.
\end{equation}

This equation can be easily demonstrated to hold for every pair of random events $X$ and $Y$ in a classical probability space and every partition $\set{n = 1}{i}{Z_{i}}$ of that space for which (\ref{rccs 1}) is true. Thus, if Hofer-Szab\'{o} and R\'{e}dei's calculations are correct, (\ref{rccs 1}) is at least a sufficient condition for (\ref{equation: cn-1}). But does (\ref{equation: correlation special}) show that (\ref{equation: cn-1}) in turn implies (\ref{rccs 1})?

To answer this question we need to look at an even further formula, namely:
\begin{eqnarray}
\prob{X\wedge Y} - \prob{X}\prob{Y} 	&   =  	&	\frac{1}{2} \dsum{i,j = 1}{n}{\prob{Z_{i}}\prob{Z_{j}}[\cprob{X}{Z_{i}}-\cprob{X}{Z_{j}}][\cprob{Y}{Z_{i}}-\cprob{Y}{Z_{j}}]} + \nonumber\\
						&	& 	\frac{1}{2} \dsum{i = 1}{n}{\prob{Z_{i}}[\cprob{X\wedge Y}{Z_{i}}-\cprob{X}{C_{i}}\cprob{Y}{C_{i}}]}. 								 \label{equation: correlation general}
\end{eqnarray}
This equality follows from the theorem of total probability alone, and it is true of all $X$ and $Y$ and all partitions $\set{n = 1}{i}{Z_{i}}$ of a classical probability space. Formula (\ref{equation: correlation special}) is clearly a special case of this more general equality, and it obtains, once again, if and only if:
\begin{equation}\label{equation: crirical condition for correlation}
\dsum{i = 1}{n}{\prob{Z_{i}}[\cprob{X\wedge Y}{Z_{i}}-\cprob{X}{Z_{i}}\cprob{Y}{Z_{i}}]} = 0.
\tag{\ref{equation: crirical condition for ab}}
\end{equation}
On the other hand, it should be clear by now that (\ref{rccs 1}) is merely a \emph{sufficient} condition for (\ref{equation: crirical condition for ab}). Put in other words, it is possible that (\ref{equation: crirical condition for ab}), and hence (\ref{equation: correlation special}), be satisfied without (\ref{rccs 1}) being true. Consequently, it is possible that (\ref{equation: cn-1}) be satisfied whilst (\ref{rccs 1}) is not, so the rejoinder presently under examination would be unsuccessful. Unfortunately, Hofer-Szab\'{o} and R\'{e}dei's proof cannot be salvaged in this way.

%-------------------------------------------------------------------------------------------------------------------------------------------------------------------------------

\section{Revision}

%-------------------------------------------------------------------------------------------------------------------------------------------------------------------------------

Our proof, as it will become apparent, will closely follow the structure of  Hofer-Szab\'{o} and R\'{e}dei's one, and all differences will ultimately depend on the different form of the mathematical constraints involved. For this reason, it will be helpful to articulate it into four sub-steps, each one corresponding to one step from the original proof.

\emph{Step 1a*}: The error with  Hofer-Szab\'{o} and R\'{e}dei's proof, as we saw, rests in their definition of admissible numbers. More specifically, it lies in the fact that (\ref{equation: adm ab}) fails to fully capture condition (\ref{rccs 1}), for which it is a necessary but not sufficient condition. The first step in rectifying Hofer-Szab\'{o} and R\'{e}dei's proof, therefore, will consist in modifying their characterization of admissible numbers. Let us accordingly lay down the following definition:

\begin{definition}\label{definition: admissible*}
Let $(\Omega, p)$ be a classical probability space with $\sigma$-algebra of random events $\Omega$ and probability measure $p$. For any $A,B\in \Omega$ satisfying (\ref{equation: positive correlation}) and any $n \geq 2$, the set
\begin{equation*}
\set{i = 1}{n}{a_{i}, b_{i}, c_{i}, d_{i}}
\end{equation*}
 is called \emph{admissible* for} $(A,B)$  if and only if the following conditions hold:
\begin{gather}
\prob{A} =\dsum{i = 1}{n}{a_{i}c_{i}}																																							\label{equation: adm a*}		\\
\prob{B}=\dsum{i = 1}{n}{b_{i}c_{i}}																																							\label{equation: adm b*}		\\
1 = \dsum{i =1}{n}{c_{i}}																																									\label{equation: adm partition*}	\\
0= d_{i}- a_{i}b_{i} 																																	\phantom{12 pt}	(i = 1, ..., n)				\label{equation: adm di*}		\\
0< [a_{i}-a_{j}][b_{i}-b_{j}]    																																\phantom{12 pt}	(1, ..., n = i\neq j = 1, ..., n)	\label{equation: adm rccs 2*}	\\	
0 <a_{i}, b_{i}, d_{i} < 1  																																\phantom{12 pt}	(i = 1, ..., n)				\label{equation: adm aibi*}	\\
0 < c_{i} < 1 																																		\phantom{12 pt}	(i = 1, ..., n).			\label{equation: adm ci*}	 	
\end{gather}
\end{definition}

Notice the above definition is largely similar to the one informally proposed by Hofer-Szab\'{o} and R\'{e}dei, save for the fact that (\ref{equation: adm ab}) is replaced by the stronger condition (\ref{equation: adm di*}), and for the fact that admissible*  sets contain $n$ more numbers $\set{i = 1}{n}{d_{i}}$ than  admissible sets, for each finite value of $n$. The latter change was indeed made necessary in order to include (\ref{equation: adm di*}).

Before we proceed any further, we ought to be sure that the above definition does not suffer from the same shortcomings as the one that it is intended to replace. This is established through the following lemma \--- proof of which is elementary, and which has been consequently omitted:
\begin{lemma}\label{lemma: admissible* numbers}
Let $(\Omega, p)$ be a classical probability space with $\sigma$-algebra of random events $\Omega$ and probability measure $p$. For any $A,B\in \Omega$ satisfying (\ref{equation: positive correlation}) and any $\set{i = 1}{n}{C_{i}}\subseteq\Omega$ where $n\geq 2$, the set $\set{i = 1}{n}{C_{i}}$ is  a \rccs  of size $n$ for $(A, B)$ if and only if there exists a set $\set{i = 1}{n}{a_{i}, b_{i}, c_{i}, d_{i}}$ of admissible* numbers for $(A, B)$such that:
\begin{gather}
\prob{C_{i}} = c_{i} 									\phantom{12 pt}	(i = 1, ..., n)					\label{equation: adm* 1} \\
\cprob{A}{C_{i}} = a_{i} 									\phantom{12 pt}	(i = 1, ..., n)					\label{equation: adm* 2} \\
\cprob{B}{C_{i}} = b_{i} 									\phantom{12 pt}	(i = 1, ..., n)					\label{equation: adm* 3} \\
\cprob{A\wedge B}{C_{i}} = d_{i}							\phantom{12 pt}	(i = 1, ..., n).				\label{equation: adm* 4}
\end{gather}
\end{lemma}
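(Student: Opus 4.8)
The statement is a biconditional tying a structural property---that the partition $\set{i=1}{n}{C_i}$ be a \rccs---to a purely arithmetic one, namely the existence of admissible* numbers obeying the identifications (\ref{equation: adm* 1})--(\ref{equation: adm* 4}). Since being a \rccs and carrying admissible* data are both properties of one fixed family $\set{i=1}{n}{C_i}$ of cells partitioning $\Omega$, the plan is to read those identifications as a dictionary translating the marginal and conditional probabilities carried by the partition into the quantities $a_i,b_i,c_i,d_i$, and then to match the three defining conditions of a \rccs against the seven conditions of an admissible* set one at a time. I would establish the two implications separately, and I expect all but one of the admissible* conditions to reduce to a direct translation.

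For the \emph{if} direction, suppose a set of admissible* numbers exists and (\ref{equation: adm* 1})--(\ref{equation: adm* 4}) hold. I would read off (\ref{rccs 0})--(\ref{rccs 2}) termwise: condition (\ref{rccs 0}) is immediate from (\ref{equation: adm ci*}), since $\prob{C_i}=c_i>0$; condition (\ref{rccs 1}) follows by substituting (\ref{equation: adm* 2})--(\ref{equation: adm* 4}) and then invoking (\ref{equation: adm di*}), which gives $\cprob{A\wedge B}{C_i}=d_i=a_ib_i=\cprob{A}{C_i}\cprob{B}{C_i}$; and condition (\ref{rccs 2}), after applying (\ref{equation: adm* 2})--(\ref{equation: adm* 3}), is literally (\ref{equation: adm rccs 2*}). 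This direction is entirely routine.

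For the \emph{only if} direction, assume the partition is a \rccs and set $a_i=\cprob{A}{C_i}$, $b_i=\cprob{B}{C_i}$, $c_i=\prob{C_i}$ and $d_i=\cprob{A\wedge B}{C_i}$ (the conditionals being well defined, since $\prob{C_i}>0$ by (\ref{rccs 0})), so that (\ref{equation: adm* 1})--(\ref{equation: adm* 4}) hold by construction. Then (\ref{equation: adm a*}) and (\ref{equation: adm b*}) are just the theorem of total probability for $A$ and for $B$ across the partition; (\ref{equation: adm partition*}) holds because $\set{i=1}{n}{C_i}$ partitions $\Omega$; (\ref{equation: adm di*}) merely restates (\ref{rccs 1}); and (\ref{equation: adm rccs 2*}) is (\ref{rccs 2}). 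Constraint (\ref{equation: adm ci*}) is secured as well: each $c_i>0$ by (\ref{rccs 0}), while $c_i<1$ because the $n-1\geq 1$ remaining terms in (\ref{equation: adm partition*}) are strictly positive.

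The step I expect to be the main obstacle is the remaining bound (\ref{equation: adm aibi*}), i.e.\ the \emph{strict} inequalities $0<a_i,b_i,d_i<1$. Read as conditional probabilities, $a_i$, $b_i$ and $d_i$ are only guaranteed to lie in the \emph{closed} interval $[0,1]$, so upgrading them to strict inequalities is the one point that is more than a relabelling: it amounts to excluding cells $C_i$ on which $A$ or $B$ is conditionally certain or conditionally impossible. The levers available are the positive correlation hypothesis (\ref{equation: positive correlation})---which already forces $0<\prob{A}<1$ and $0<\prob{B}<1$, since $\prob{A}\in\{0,1\}$ would collapse $\gamma$ to zero---together with the comonotonicity encoded in (\ref{rccs 2}), which makes the pairs $(a_i,b_i)$ strictly ordered and hence confines at most one of them to each extreme. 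Determining precisely how far these two facts suffice to rule out boundary values of the individual $a_i$, $b_i$ and $d_i$ is the delicate part of the argument, and it is where I would concentrate the effort.
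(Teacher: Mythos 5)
Your ``if'' direction is correct and is indeed the routine translation you describe: (\ref{rccs 0}) from (\ref{equation: adm ci*}), (\ref{rccs 1}) from (\ref{equation: adm di*}) via the identifications, (\ref{rccs 2}) from (\ref{equation: adm rccs 2*}). (One pedantic caveat: under the identifications, (\ref{equation: adm partition*}) only yields $\sum_i \prob{C_i}=1$, which does not by itself make $\set{i=1}{n}{C_i}$ a partition; like you, the paper evidently reads the family as a partition from the outset.) The paper offers no argument to compare against --- it declares the lemma ``elementary'' and omits the proof --- and this direction is the only one its later results actually use: Lemma \ref{lemma: admissible* numbers exist} and Proposition \ref{proposition: existence of rccs} manufacture admissible* numbers, build an extension realising the identifications, and conclude via precisely the implication you proved.

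The problem is the ``only if'' direction, where your proposal stops at exactly the right place but leaves the decisive step open, so it is not yet a proof --- and in fact the step cannot be carried out: the strict bounds (\ref{equation: adm aibi*}) do not follow from (\ref{rccs 0})--(\ref{rccs 2}) together with (\ref{equation: positive correlation}), so the lemma's ``only if'' half is false as literally stated. Concretely, take a partition $\set{}{}{C_1, C_2}$ with $\prob{C_1}=\prob{C_2}=\frac{1}{2}$; let $A$ be conditionally impossible on $C_1$ while $\cprob{B}{C_1}=\frac{1}{2}$, and let $A$ and $B$ be conditionally independent on $C_2$ with $\cprob{A}{C_2}=\frac{1}{2}$ and $\cprob{B}{C_2}=\frac{3}{4}$. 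Screening off holds on both cells (on $C_1$ trivially, since $\cprob{A\wedge B}{C_1}=0=\cprob{A}{C_1}\cprob{B}{C_1}$); condition (\ref{rccs 2}) holds because $\left[0-\frac{1}{2}\right]\left[\frac{1}{2}-\frac{3}{4}\right]=\frac{1}{8}>0$; and $\prob{A\wedge B}=\frac{3}{16}>\frac{5}{32}=\prob{A}\prob{B}$, so $(A,B)$ satisfies (\ref{equation: positive correlation}) and $\set{i=1}{2}{C_i}$ is a genuine \rccs of size $2$. Yet $a_1=d_1=0$, violating (\ref{equation: adm aibi*}), so no admissible* set can satisfy the identifications. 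The two levers you name are genuinely insufficient: positive correlation bounds only the averages $a=\sum_i a_i c_i$ and $b=\sum_i b_i c_i$ strictly between $0$ and $1$, and the comonotonicity in (\ref{rccs 2}) merely orders the $a_i$ (and $b_i$) strictly, which is compatible with the least of them being $0$ or the greatest being $1$. The correct repair is to weaken (\ref{equation: adm aibi*}) to closed bounds in the only-if direction, or to assert only the implication you proved; since the paper never invokes the converse, this does not damage its main result, but your suspicion that the boundary cases are where the lemma strains was exactly on target.
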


\emph{Step 1b*}: Now that Definition \ref{definition: admissible*} is securely in place, we can continue with our demonstration. Following Hofer-Szab\'{o} and R\'{e}dei, our next step will be to restrict the number of independent parameters needed to identify an admissible* set. This will be achieved by means of our second  lemma:
\begin{lemma}\label{lemma: admissible* variant}
Let $(\Omega, p)$ be a classical probability space with $\sigma$-algebra of random events $\Omega$ and probability measure $p$. Let $A,B\in \Omega$ satisfy (\ref{equation: positive correlation}), let set $\set{i = 1}{n}{C_{i}}\subseteq\Omega$ with $n\geq 2$, and let $\set{i = 1}{n}{a_{i}, b_{i}, c_{i}, d_{i}}$ be defined so as to meet conditions (\ref{equation: adm* 1})--(\ref{equation: adm* 4}). Finally, let (\ref{equation: adm rccs 2*})--(\ref{equation: adm ci*}) obtain. Then, $\set{i = 1}{n}{a_{i}, b_{i}, c_{i}, d_{i}}$ is admissible*  for $(A, B)$ if and only if :
\begin{gather}
a_{n} = \frac{a-\dsum{k=1}{n-1}{c_{k} a_{k}}}{1-\dsum{k=1}{n-1}{c_{k}}}																																	\label{equation: an*}		\\
b_{n} = \frac{b-\dsum{k=1}{n-1}{c_{k} b_{k}}}{1-\dsum{k=1}{n-1}{c_{k}}}																																	\label{equation: bn*}		\\
c_{n} = 1- \dsum{k=1}{n-1}{c_{k}} 																																							\label{equation: cn*}		\\
d_{n} = \frac{\left[a - \dsum{k =1}{n-1}{a_{k}c_{k}}\right]\left[b - \dsum{k = 1}{n-1}{b_{k}c_{k}}\right]}{\left[ 1 - \dsum{k = 1}{n-1}{c_{k}} \right ]^{2}} 																					\label{equation: dn*}		\\
0= d_{k}- a_{k}b_{k} 																																	\phantom{12 pt}	(k = 1, ..., n-1)			\label{equation: di*}		
\end{gather}
where
\setlength{\jot}{6  pt}
\begin{gather}
a := \prob{A}																		\label{equation: a}			\\
b := \prob{B}.																		\label{equation: b}			
\end{gather}
\end{lemma}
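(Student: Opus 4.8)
The plan is to cash in the standing hypotheses at the outset. Since conditions (\ref{equation: adm rccs 2*})--(\ref{equation: adm ci*}) are assumed to hold throughout, Definition \ref{definition: admissible*} shows that $\set{i = 1}{n}{a_{i}, b_{i}, c_{i}, d_{i}}$ is admissible* for $(A, B)$ precisely when the four remaining constraints (\ref{equation: adm a*}), (\ref{equation: adm b*}), (\ref{equation: adm partition*}) and (\ref{equation: adm di*}) are met. The entire argument thus reduces to proving that this quadruple is logically equivalent to the system (\ref{equation: an*})--(\ref{equation: di*}); the natural route is to isolate the ``$n$-th'' unknowns $a_{n}, b_{n}, c_{n}, d_{n}$ and solve for them in terms of the remaining $n-1$ parameters.

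First I would dispatch the partition constraint: (\ref{equation: adm partition*}), i.e. $\sum_{i=1}^{n} c_{i} = 1$, is equivalent to (\ref{equation: cn*}) simply by transposing the first $n-1$ summands. Crucially, this identifies the denominator $1 - \sum_{k=1}^{n-1} c_{k}$ occurring in (\ref{equation: an*}), (\ref{equation: bn*}) and (\ref{equation: dn*}) with $c_{n}$, which is strictly positive by (\ref{equation: adm ci*}); this is the single point of delicacy, since it is what licenses the divisions and hence the passage between summation form and solved form. Working under (\ref{equation: cn*}), I would then split off the $n$-th term of (\ref{equation: adm a*}) to get $a_{n} c_{n} = a - \sum_{k=1}^{n-1} a_{k} c_{k}$ and divide by $c_{n} = 1 - \sum_{k=1}^{n-1} c_{k} \neq 0$, obtaining exactly (\ref{equation: an*}); since every step is reversible, this yields the equivalence of (\ref{equation: adm a*}) with (\ref{equation: an*}), and the symmetric treatment of (\ref{equation: adm b*}) gives its equivalence with (\ref{equation: bn*}).

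It remains to handle (\ref{equation: adm di*}), which I would decompose index by index. For $i = 1, \ldots, n-1$ the equation $d_{i} = a_{i} b_{i}$ is verbatim (\ref{equation: di*}), so these indices contribute nothing new. The only genuine computation is the case $i = n$, namely $d_{n} = a_{n} b_{n}$: granting the already-established (\ref{equation: an*}) and (\ref{equation: bn*}), I would substitute the closed forms for $a_{n}$ and $b_{n}$ and multiply, turning $d_{n} = a_{n} b_{n}$ into precisely (\ref{equation: dn*}); conversely (\ref{equation: dn*}) together with (\ref{equation: an*})--(\ref{equation: bn*}) forces $d_{n} = a_{n} b_{n}$. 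Assembling the four equivalences then shows that admissibility* is equivalent to the conjunction (\ref{equation: an*})--(\ref{equation: di*}), as required.

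I expect no real obstacle, the content being elementary linear algebra, with only the $d_{n}$-relation being bilinear rather than linear. The one thing that must not be glossed over is the non-vanishing of $1 - \sum_{k=1}^{n-1} c_{k}$, which is exactly where hypothesis (\ref{equation: adm ci*}) earns its keep: without $c_{n} > 0$ the solved-form expressions (\ref{equation: an*}), (\ref{equation: bn*}) and (\ref{equation: dn*}) would be ill-defined and the equivalence would collapse.
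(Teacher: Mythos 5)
Your proposal is correct and takes essentially the same approach as the paper: both reduce admissibility* (given the standing inequalities) to the four equational constraints (\ref{equation: adm a*}), (\ref{equation: adm b*}), (\ref{equation: adm partition*}) and (\ref{equation: adm di*}), solve for the $n$-th unknowns, and rest the divisions on the fact that $1-\sum_{k=1}^{n-1}c_{k}=c_{n}>0$ by (\ref{equation: adm ci*}). The only cosmetic difference is that where you substitute the solved forms of $a_{n}$ and $b_{n}$ directly into $d_{n}=a_{n}b_{n}$, the paper packages the same computation as the auxiliary identity (\ref{equation: dn general}) and observes that its first addendum $d_{n}-a_{n}b_{n}$ vanishes in each direction.
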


\begin{proof}
Let $(\Omega, p)$ be a classical probability space with $\sigma$-algebra of random events $\Omega$ and probability measure $p$. Moreover,  let $A,B\in \Omega$ be positively correlated and let the set $\set{i = 1}{n}{a_{i}, b_{i}, c_{i}, d_{i}}$ of $n\geq 2$ real numbers satisfy conditions  (\ref{equation: adm* 1})--(\ref{equation: adm* 4}) and (\ref{equation: adm rccs 2*})--(\ref{equation: adm ci*}). Finally, let identities (\ref{equation: a}) and (\ref{equation: b}) be in place.

To begin with let us  observe that, owing to the assumptions just stated and owing to the theorem of total probability, (\ref{equation: adm a*})--(\ref{equation: adm partition*}) are in fact equivalent to (\ref{equation: an*})--(\ref{equation: cn*}). Our proof  thus reduces to showing that (\ref{equation: adm di*}) obtains if and only if (\ref{equation: dn*})--(\ref{equation: di*}) do. This will be done with the help of the following formula
\begin{equation}\label{equation: dn general}
d_{n} = [d_{n} - a_{n}b_{n}] + \frac{\left[a - \dsum{k = i}{n-1}{a_{k}c_{k}}\right]\left[b - \dsum{k = 1}{n-1}{b - b_{k}}\right]}{\left[ 1 - \dsum{k = 1}{n-1}{c_{k}}\right ]^{2}}, 																									
\end{equation}
which, it must be noticed, can be obtained from the theorem of total probability alone. This is to say that (\ref{equation: dn general}) holds quite generally, and that it does not depend on any specific assumptions other than the obvious request that
\begin{gather}
\left[ 1 - \dsum{k = 1}{n-1}{c_{k}}\right ]^{2} \neq 0.
\end{gather}
Clearly, in our case this is guaranteed by (\ref{equation: adm ci*}) and (\ref{equation: adm partition*}).

Thanks to the above equality, proving necessity becomes elementary. If (\ref{equation: adm di*})  holds, then (\ref{equation: di*}) follows a fortiori, whereas to obtain (\ref{equation: dn*}) we only need to observe that (\ref{equation: adm di*}) makes the first addendum in (\ref{equation: dn general}) equal to zero. Let us now turn to sufficiency, so let (\ref{equation: dn*}) and (\ref{equation: di*}) be in place. Once again, the first addendum in (\ref{equation: dn general}) becomes equal to zero, this time thanks to (\ref{equation: dn*}). This result, along with (\ref{equation: di*}), suffices to establish (\ref{equation: adm di*}).
\end{proof}

\emph{Step 1c*}: Just like conditions (\ref{equation: cn-1})--(\ref{equation: bn}) did for admissible sets in the original proof, equations (\ref{equation: an*})--(\ref{equation: di*})  effectively reduce the number of independent parameters whose existence is to be proved in order to establish the existence of admissible* sets of arbitrary finite cardinality. Let us observe, in particular, that (\ref{equation: di*}) is in reality a system of $n-1$ equations, namely one for each value of $i = 1, ..., n-1$. Therefore, (\ref{equation: an*})--(\ref{equation: di*}) jointly comprise $4 + (n - 1) = n + 3$ equations in $4 n$ variables. This means that, for each $n\geq 2$, each admissible* set for $(A, B)$ is determined by a set of $4n - (n+3) = 3n- 3$ parameters, whose existence we are now going to establish by induction.

\begin{lemma}\label{lemma: admissible* numbers exist}
Let $(\Omega, p)$ be a classical probability space with $\sigma$-algebra of random events $\Omega$ and probability measure $p$. For any $A,B\in \Omega$ satisfying (\ref{equation: positive correlation}), a set $\set{i = 1}{n}{a_{i}, b_{i}, c_{i}, d_{i}}$ of admissible* numbers for $(A, B)$exists for each $n\geq 2$.
\end{lemma}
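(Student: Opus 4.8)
The plan is to argue by induction on $n$. By Lemma~\ref{lemma: admissible* variant}, once the values $a_1,\dots,a_{n-1}$, $b_1,\dots,b_{n-1}$, $c_1,\dots,c_{n-1}$ are fixed, the remaining entries $a_n,b_n,c_n$ and all the $d_i$ are pinned down by (\ref{equation: an*})--(\ref{equation: di*}), with $d_i=a_ib_i$ holding throughout; in particular the bounds $0<d_i<1$ follow automatically from $0<a_i,b_i<1$. Hence it suffices, for each $n\geq2$, to exhibit $n$ weights $c_i\in(0,1)$ with $\sum_{i=1}^{n}c_i=1$ together with $n$ points $(a_i,b_i)\in(0,1)^2$ that are strictly \emph{comonotone} (so that (\ref{equation: adm rccs 2*}) holds) and whose $c$-weighted means are $a=\prob{A}$ and $b=\prob{B}$. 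I would moreover arrange that the induced covariance equals $\gamma$, i.e.
\[
\tfrac12\sum_{i,j=1}^{n}c_ic_j[a_i-a_j][b_i-b_j]=\gamma;
\]
by (\ref{equation: correlation special}) this is exactly the condition $\sum_{i}a_ib_ic_i=\prob{A\wedge B}$, which is what will let the numbers be realised as conditional probabilities in an extension at Step~2.

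For the base case $n=2$ I would solve the system explicitly. Taking $c_1=c_2=\tfrac12$ and writing $a_1=a-s$, $a_2=a+s$, $b_1=b-t$, $b_2=b+t$ with $s,t>0$ secures both means and comonotonicity, while the covariance reduces to $st$; it then remains to choose $s<\min(a,1-a)$ and $t<\min(b,1-b)$ with $st=\gamma$, which is possible precisely when $\gamma<\min(a,1-a)\min(b,1-b)$. Since $\gamma$ may be as large as the Fr\'{e}chet bound $\min(a,b)-ab$, equal weights do not always suffice, so in general I would let $c_1$ vary and place most of the mass at the high values: as the two points approach the comonotone coupling the covariance tends to $\min(a,b)-ab$, so every $\gamma$ with $0<\gamma<\min(a,b)-ab$ is attained inside the open box. (The boundary case $\gamma=\min(a,b)-ab$ forces a degenerate, deterministic dependence and is ruled out by the strict bounds of Definition~\ref{definition: admissible*}.)

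For the inductive step $n\to n+1$ I would start from a size-$n$ configuration with means $a,b$ and covariance $\gamma$ and perform two moves. First, \emph{split} the top point $(a_n,b_n)$ of weight $c_n$ into two strictly ordered points $(a_n-u,b_n-v)$ and $(a_n+u,b_n+v)$, each of weight $c_n/2$, with $u,v>0$ small: this preserves the weighted means and raises the covariance by the small amount $\delta:=c_nuv>0$. Second, \emph{contract} all $n+1$ points towards the grand mean $(a,b)$ by a common factor $1-\tau$, replacing each $a_i$ by $a+(1-\tau)(a_i-a)$ and each $b_i$ by $b+(1-\tau)(b_i-b)$; this leaves the means unchanged, preserves strict comonotonicity, and multiplies the covariance by $(1-\tau)^2$. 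Choosing $\tau=1-\sqrt{\gamma/(\gamma+\delta)}\in(0,1)$ restores the covariance to exactly $\gamma$, and for $u,v$ (hence $\delta,\tau$) small enough all entries stay strictly inside $(0,1)$ and strictly ordered while the weights remain positive and sum to $1$. Setting $d_i:=a_ib_i$ then yields an admissible* set of size $n+1$.

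I expect the main obstacle to be the interaction between the single equality fixing the covariance and the many strict inequalities (the open-box bounds together with strict comonotonicity): matching the marginal means alone is easy, but matching $\gamma$ as well while keeping every $a_i,b_i,c_i$ in the open unit interval is delicate. This tension is sharpest in the base case, where feasibility is governed exactly by the Fr\'{e}chet bound $\gamma<\min(a,b)-ab$; in the inductive step it is defused by performing only small perturbations, so that the strict inequalities, being open conditions, are automatically preserved.
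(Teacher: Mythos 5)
Your construction is correct in substance, but it proves a deliberately \emph{stronger} statement than the paper's Lemma \ref{lemma: admissible* numbers exist}, and the difference is instructive. Definition \ref{definition: admissible*} nowhere requires $\sum_{i=1}^{n} c_i d_i = \prob{A\wedge B}$: the ``covariance'' of an admissible* set is left completely free, and the paper's own proof exploits this freedom --- its base case sends $c_1 \rightarrow 0$ with $0 < a_1 < a$ and $0 < b_1 < b$ (so $(a_2,b_2)\rightarrow(a,b)$ and the covariance tends to $0$, not to $\gamma$), and its inductive step adjoins one new point below all the old ones with vanishing weight $c'_m \rightarrow 0$, letting (\ref{equation: an*})--(\ref{equation: dn*}) regenerate the last point near the old one; every strict constraint survives because it is an open condition. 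Your split-and-contract induction rests on the same openness idea, but you additionally carry the equality $\frac12\sum_{i,j=1}^{n} c_ic_j[a_i-a_j][b_i-b_j]=\gamma$, equivalently $\sum_{i=1}^{n} a_ib_ic_i = \prob{A\wedge B}$, through every step, which is what makes your base case genuinely harder (your Fr\'{e}chet-bound analysis is accurate: equal weights give feasibility exactly for $\gamma < \min(a,1-a)\min(b,1-b)$, free weights for $\gamma < \min(a,b)-ab$). What the extra equality buys is substantial: it is precisely the normalisation $\sum_{i=1}^{n} r_i^k = 1$ that Step 2* needs for (\ref{equation: r1i})--(\ref{equation: r4i}) to define splitting ratios of a genuine partition, since $\sum_{i=1}^{n} r_i^1 = \bigl[\sum_{i=1}^{n} c_i a_i b_i\bigr]/\prob{A\wedge B}$ equals $1$ only when the covariance matches $\gamma$. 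The paper's admissible* sets, with covariance near zero, do \emph{not} satisfy this, so your strengthened lemma is arguably the one Proposition \ref{proposition: existence of rccs} actually requires. What it costs is the strictly correlated case $\gamma=\min(a,b)-ab$: there $\sum_{i=1}^{n} c_i a_i(1-b_i)=0$ is indeed impossible under the open bounds (\ref{equation: adm aibi*})--(\ref{equation: adm ci*}), exactly as your parenthetical remark says, so your version holds only for non-strict correlations --- matching Hofer-Szab\'{o} and R\'{e}dei's original restriction, whereas the literal Lemma \ref{lemma: admissible* numbers exist} covers strict pairs too (trivially, because it does not constrain the covariance; to prove the literal statement for such pairs you would simply drop your self-imposed equality).

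Two points to tighten if you write this up: the general-$\gamma$ base case needs an explicit intermediate-value argument, e.g.\ a one-parameter family of interior configurations connecting a near-independent one to a small perturbation of the boundary configuration $c_2 = b$, $(a_2,b_2)=(a/b,1)$, $(a_1,b_1)=(0,0)$ for $a\leq b$; and in the inductive step you should fix concrete small values of $u,v$ (and hence $\tau$) at which all strict inequalities are checked, rather than only taking limits --- a looseness, incidentally, that the paper's own proof shares with its ``$c_1\rightarrow 0$'' and ``$c'_m\rightarrow 0$'' manoeuvres.
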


\begin{proof}
Let $(\Omega, p)$ be a classical probability space with $\sigma$-algebra of random events $\Omega$ and probability measure $p$. Moreover, let $A,B\in \Omega$ satisfy (\ref{equation: positive correlation}), and let us further assume, for ease of expression, identities (\ref{equation: a})--(\ref{equation: b}). Proof will proceed by induction on $n$.

First, let $n = 2$ be our inductive basis: in this case, equations (\ref{equation: an*})--(\ref{equation: di*}) reduce to
\begin{gather}
a_{2} = \frac{a - c_{1} a_{1}}{1-c_{1}}																																			\label{equation: a2}		\\
b_{2} = \frac{b- c_{1}  b_{1}}{1-c_{1}}																																			\label{equation: b2}		\\		
c_{2} = 1- c_{1}																																							\label{equation: c2}		\\
d_{2} = \frac{[a - a_{1}c_{1}][b - b_{1}c_{1}]}{[ 1 - c_{1}]^{2}} 																															\label{equation: d2}		\\
d_{1} =  a_{1}b_{1}. 																																						\label{equation: d1}		     												
\end{gather}	
Because $a$ and $b$ and are given, choosing numbers $c_{1}$, $a_{1}$ and $b_{1}$ will therefore suffice to fix the values of all $4n = 8$ variables in the system. Now let us observe that, as a direct consequence of (\ref{equation: positive correlation}), numbers $a$ and $b$ must lie strictly between one and zero:
\begin{gather}
1 > a,b > 0.
\end{gather}
Since we are interested simply in the existence of some admissible* set, we are free to assume that
\begin{gather}
 c_{1} \rightarrow  0 \\
1> a > a_{1} >  0 \\
1> b > b_{1} > 0
\end{gather}
in order to ensure that
\begin{gather}
0 <a_{1}, a_{2}, b_{1}, b_{2}, d_{1}, d_{2} < 1  																																		\label{equation: a1a2b1b2}	\\
0 < c_{1}, c_{2} < 1 																																						\label{equation: c1c2}	 	\\
0< [a_{1}-a_{2}][b_{1}-b_{2}],																																					\label{equation: a1b1a2b2 adm rccs 2}	
\end{gather}	
which is straightforward to check. The set $\set{i = 1}{2}{a_{i}, b_{i}, c_{i}, d_{i}}$ so defined accordingly satisfies  (\ref{equation: adm rccs 2*})--(\ref{equation: adm ci*}),  in addition to (\ref{equation: an*})--(\ref{equation: di*}). By Lemma \ref{lemma: admissible* variant}, it is therefore admissible* for $(A, B)$.

Next, let $n = m>2$ and let $\set{i = 1}{m}{a_{i}, b_{i}, c_{i}, d_{i}}$ be admissible* for $(A, B)$ as our inductive hypothesis. To prove that an admissible* set for $(A, B)$ exists if $n = m +1$, let us consider the following subset of the admissible* set for $n=m$
\begin{equation*}
\set{j = 1}{m-1}{a_{j}, b_{j}, c_{j}, d_{j}}\subset\set{i = 1}{m}{a_{i}, b_{i}, c_{i}, d_{i}},
\end{equation*}
which we take by assumption to constitute a set of $3(m-1)$ parameters. Let us then choose numbers $a'_{m}, b'_{m}, c'_{m}$ such that
\begin{gather}
a_{j} > a'_{m}  > 0 			\phantom{12 pt}	( j = 1, ..., m-1)	\label{equation: a'm} \\
b_{j} > b'_{m}  > 0 			\phantom{12 pt}	( j = 1, ..., m-1)	\label{equation: b'm} \\
c_{j} > c'_{m} > 0 				\phantom{12 pt}	( j = 1, ..., m-1).	\label{equation: c'm}
\end{gather}
Given (\ref{equation: an*})--(\ref{equation: di*}), the set
\begin{gather*}
\set{j = 1}{m-1}{a_{j}, b_{j}, c_{j}, d_{j}}\cup\set{}{}{a'_{m}, b'_{m}, c'_{m}}
\end{gather*}
of $3(m-1)+3 = 3(m+1)-3 = 3n-3$ parameters will then suffice to determine $4(m+1)$ numbers:
\begin{gather*}
\set{j = 1}{m-1}{a_{j}, b_{j}, c_{j}, d_{j}, a'_{m}, b'_{m}, c'_{m}, d'_{m}, a_{m+1}, b_{m+1}, c_{m+1}, d_{m+1}}.
\end{gather*}
Because conditions (\ref{equation: adm rccs 2*})--(\ref{equation: adm ci*}) hold by assumption for numbers $\set{j = 1}{m-1}{a_{j}, b_{j}, c_{j}, d_{i}, a'_{m}, b'_{m}, c'_{m},}$, all we need to show is that said constraints be also satisfied by $\set{}{}{a_{m+1}, b_{m+1}, c_{m+1}, d'_{m}, d_{m+1}}$. To this purpose, let us first notice that  (\ref{equation: adm aibi*}) must be true of $d_{m}$ by virtue of (\ref{equation: di*}) and (\ref{equation: a'm})--(\ref{equation: b'm}). Next, thanks to (\ref{equation: an*})--(\ref{equation: dn*}), it will be sufficient to suppose that
\begin{gather}
c'_{m} \rightarrow 0
\end{gather}
to obtain
\begin{gather}
a_{m+1} \rightarrow a_{m}\\
b_{m+1} \rightarrow b_{m}\\
c_{m+1} \rightarrow c_{m}\\
d_{m+1} \rightarrow d_{m},
\end{gather}
which we already know, by our inductive hypothesis, to satisfy  (\ref{equation: adm rccs 2*})--(\ref{equation: adm ci*}).  Due once again to Lemma \ref{lemma: admissible* variant}, the set of $4(m+1)$ numbers so determined is therefore admissible* for $(A, B)$. This completes our inductive proof.
\end{proof}

\emph{Step 2*}: Now that we are in possession of $4n$ admissible* numbers for each arbitrary integer value of $n\geq 2$ and every correlated pair $(A, B)$, the remainder of our proof will simply mimic the logical machinery set up by Hofer-Szab\'{o} and R\'{e}dei in the latter part of their proof. That is, admissible* numbers will hereafter accomplish the task which admissible numbers were supposed to perform in the original proof. To establish the existence of $\rccs$s of arbitrary finite size, we accordingly need only one more definition:

\begin{definition}\label{definition: extension}
Let $(\Omega, p)$ and $(\Omega', p')$ be  classical probability spaces with $\sigma$-algebras of random events $\Omega$ and $\Omega'$ and with probability measures $p$ and $p'$, respectively. Then $(\Omega', p')$ is called an \emph{extension} of $(\Omega, p)$ if and only if there exists an injective lattice homomorphism $h: \Omega \rightarrow \Omega'$, preserving complementation, such that
\begin{equation}
\altprob{h(X)} = \prob{X}	\phantom{12 pt} \mbox{for all } X\in \Omega.
\end{equation}
\end{definition}

Thanks to this, we can now finally state the result we have been after:

\begin{proposition}\label{proposition: existence of rccs}
Let $(\Omega, p)$ be a classical probability space with $\sigma$-algebra of random events $\Omega$ and probability measure $p$. For any $A,B\in \Omega$ satisfying (\ref{equation: positive correlation}) and any $n\geq 2$, there is some extension $(\Omega', p')$ of $(\Omega, p)$ whereby a \rccs of size $n$ exists for $(A, B)$.
\end{proposition}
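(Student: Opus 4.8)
The plan is to convert the purely arithmetical fact supplied by Lemma~\ref{lemma: admissible* numbers exist} into a genuine probabilistic structure, and then to read off the desired \rccs by means of Lemma~\ref{lemma: admissible* numbers}. By Lemma~\ref{lemma: admissible* numbers exist}, for the given $n\geq 2$ and the correlated pair $(A,B)$ there is a set $\set{i=1}{n}{a_{i}, b_{i}, c_{i}, d_{i}}$ of admissible* numbers for $(A,B)$. It then suffices to exhibit an extension $(\Omega', p')$ of $(\Omega, p)$, in the sense of Definition~\ref{definition: extension}, that carries a partition $\set{i=1}{n}{C_{i}}$ whose probabilities coincide with those numbers, i.e.\ that realises identifications (\ref{equation: adm* 1})--(\ref{equation: adm* 4}) with $p$ replaced by $p'$: for once this is done, Lemma~\ref{lemma: admissible* numbers} immediately certifies that $\set{i=1}{n}{C_{i}}$ is a \rccs of size $n$ for $(A,B)$.

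For the construction I would take $\Omega'$ to be the Boolean algebra freely generated by $\Omega$ together with a fresh $n$-fold partition --- concretely, the tensor product of $\Omega$ with the power set of $\{1, \dots, n\}$ --- let $h\colon\Omega\to\Omega'$ be the canonical embedding, which is an injective, complementation-preserving lattice homomorphism, and set $C_{i}$ to be the $i$-th cell of the fresh partition. The measure $p'$ would first be prescribed on the finite subalgebra generated by $h(A)$, $h(B)$ and $C_{1}, \dots, C_{n}$ by assigning to its $4n$ atoms the factorised weights dictated by the admissible* numbers, namely $\altprob{h(A)\wedge h(B)\wedge C_{i}} = a_{i}b_{i}c_{i}$ together with the three analogous products obtained by replacing $h(A)$ and $h(B)$ by their complements; the screening-off equality (\ref{equation: adm di*}) is thereby built directly into the joint distribution. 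These weights are non-negative by (\ref{equation: adm aibi*})--(\ref{equation: adm ci*}) and sum to $\dsum{i=1}{n}{c_{i}} = 1$ by (\ref{equation: adm partition*}), so $p'$ is a probability measure, which I would then extend to all of $\Omega'$ in the standard way. A routine computation yields $\altprob{C_{i}} = c_{i}$ and, for each $i$, the conditional values $a_{i}$, $b_{i}$ and $a_{i}b_{i} = d_{i}$ for $A$, $B$ and $A\wedge B$ given $C_{i}$, so that (\ref{equation: adm* 1})--(\ref{equation: adm* 4}) hold.

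The main obstacle is not this realisation but the requirement, inherent in Definition~\ref{definition: extension}, that $p'$ agree with $p$ along $h$, i.e.\ $\altprob{h(X)} = \prob{X}$ for every $X\in\Omega$. Summing the prescribed atomic weights over $i$ gives $\altprob{h(A)} = \dsum{i=1}{n}{a_{i}c_{i}}$, $\altprob{h(B)} = \dsum{i=1}{n}{b_{i}c_{i}}$ and $\altprob{h(A)\wedge h(B)} = \dsum{i=1}{n}{a_{i}b_{i}c_{i}}$; the first two equal $\prob{A}$ and $\prob{B}$ by the admissibility* equations (\ref{equation: adm a*})--(\ref{equation: adm b*}), so the marginals of $A$ and $B$ are preserved automatically. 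The genuinely delicate point is the joint value: preservation forces $\dsum{i=1}{n}{a_{i}b_{i}c_{i}} = \prob{A\wedge B}$, and this identity is \emph{not} among the admissibility* conditions. I would therefore need to check that the admissible* numbers furnished by Lemma~\ref{lemma: admissible* numbers exist} can actually be chosen so as to meet it --- equivalently, recalling (\ref{equation: correlation special}), that the strictly positive surplus $\frac{1}{2}\dsum{i,j=1}{n}{c_{i}c_{j}[a_{i}-a_{j}][b_{i}-b_{j}]}$ can be tuned to equal the prescribed correlation $\prob{A\wedge B} - \prob{A}\prob{B}$ while the sign constraints (\ref{equation: adm rccs 2*}) and the bounds (\ref{equation: adm aibi*})--(\ref{equation: adm ci*}) remain in force. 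Securing this compatibility is where the real content lies; once it is in hand, the agreement of $p'$ with $p$ on the rest of $\Omega$ follows by distributing each original atomic weight proportionally across the cells $C_{1}, \dots, C_{n}$, and the argument closes by an appeal to Lemma~\ref{lemma: admissible* numbers}.
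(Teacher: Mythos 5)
Your construction is, in substance, the paper's own: the paper proves Proposition \ref{proposition: existence of rccs} by porting Step 2 of Hofer-Szab\'{o} and R\'{e}dei's argument, splitting each of the four events $A\wedge B$, $A\wedge\overline{B}$, $\overline{A}\wedge B$, $\overline{A}\wedge\overline{B}$ across the new cells in the proportions (\ref{equation: r1i})--(\ref{equation: r4i}), which is exactly what your ``factorised weights on the generated subalgebra, then distribute the remaining weight quadrant by quadrant'' amounts to. The trouble is that your proposal is not yet a proof: the step you yourself call ``where the real content lies'' --- that admissible* numbers can be chosen with $\sum_{i=1}^{n} a_{i}b_{i}c_{i} = \prob{A\wedge B}$ --- is flagged but never established, and it does not follow from anything available. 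Nothing in Definition \ref{definition: admissible*} ties the numbers to $\prob{A\wedge B}$, and the numbers Lemma \ref{lemma: admissible* numbers exist} actually delivers violate the identity: its inductive basis takes $c_{1}\rightarrow 0$ with fixed $0<a_{1}<a$ and $0<b_{1}<b$, so by (\ref{equation: correlation special}) the surplus $c_{1}c_{2}[a_{2}-a_{1}][b_{2}-b_{1}]$ tends to $0$ rather than to $\gamma>0$, i.e.\ $\sum_{i} a_{i}b_{i}c_{i}\rightarrow \prob{A}\prob{B} < \prob{A\wedge B}$; the inductive step's $c'_{m}\rightarrow 0$ likewise leaves the surplus untuned. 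Moreover, your own necessity remark shows the identity cannot be dodged by a cleverer extension: total probability in $(\Omega',p')$ plus Definition \ref{definition: extension} force $\prob{A\wedge B} = \sum_{i} c_{i}d_{i}$ whenever (\ref{equation: adm* 1})--(\ref{equation: adm* 4}) are realised. Closing the gap therefore requires reinstating a joint-matching constraint --- in effect Hofer-Szab\'{o} and R\'{e}dei's (\ref{equation: adm ab}) --- into admissibility* and re-proving the existence lemma with a tuned cell weight, an analogue of (\ref{equation: cn-1}), in place of the free limit; as written, your closing ``appeal to Lemma \ref{lemma: admissible* numbers}'' is unavailable.

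You should know, though, that the point you isolate is glossed over by the paper itself. Its proof of the Proposition merely substitutes (\ref{equation: r1i})--(\ref{equation: r4i}) into the earlier construction; but that construction yields a probability measure with $h$ measure-preserving only if $\sum_{i} r_{i}^{k} = 1$ for each $k$, and $\sum_{i} r_{i}^{1} = 1$ just is $\sum_{i} c_{i}d_{i} = \prob{A\wedge B}$ --- secured in the original by condition (\ref{equation: adm ab}), precisely the condition Definition \ref{definition: admissible*} drops. That the omission is substantive, not cosmetic, shows up in the paper's advertised strengthening to strictly correlated pairs: if, say, $\prob{A\wedge B} = \prob{B}$, then in any extension screening-off forces $b_{i}[1-a_{i}] = 0$ for every cell, which together with (\ref{rccs 2}) caps the size of any \rccs at $n = 2$, so the Proposition read as covering strict correlations for all $n\geq 2$ cannot be obtained by any variant of this argument. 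In short: you have correctly reconstructed the intended construction and, to your credit, located the exact soft spot that the paper's own one-paragraph proof leaves unaddressed --- but a flagged gap is still a gap, and your proposal does not yet prove the statement.
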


\begin{proof}
The proof here is in all respects similar to Step 2 in \cite{hofer-szabo redei 2006}, the sole difference being that sets of admissible numbers should be replaced with sets of admissible* ones. This, in particular, will require replacing Hofer-Szab\'{o} and R\'{e}dei's equations (60)--(63) with:
\begin{gather}
r_{i}^{1} = \frac{c_{i}d_{i}}{\prob{A\wedge B}}										\label{equation: r1i} 	\\
r_{i}^{2} = \frac{c_{i}a_{i}-c_{i}d_{i}}{\prob{A\wedge \overline{B}}} 							\label{equation: r2i}	\\
r_{i}^{3} = \frac{c_{i}b_{i}-c_{i}d_{i}}{\prob{\overline{A}\wedge B}} 							\label{equation: r3i}	\\
r_{i}^{4} = \frac{c_{i} -c_{i}a_{i}-c_{i}b_{i}+c_{i}d_{i}}{\prob{\overline{A}\wedge \overline{B}}}.		\label{equation: r4i}	
\end{gather}
Owing to (\ref{equation: adm di*}), however,  Hofer-Szab\'{o} and R\'{e}dei's equations can be immediately recovered.
\end{proof}

Remarkably, Proposition \ref{proposition: existence of rccs} applies to \emph{every} pair of positively correlated events, be they strictly correlated or not. The analogous result announced by Hofer-Szab\'{o} and R\'{e}dei, on the contrary, concerned non-strictly correlated pairs only. The proof presented in this section, therefore, improves on their demonstration in two ways: it firstly rectifies the proof, but it also generalises the proof.

%-------------------------------------------------------------------------------------------------------------------------------------------------------------------------------

\section{Conclusion}

%-------------------------------------------------------------------------------------------------------------------------------------------------------------------------------

Reichenbachian common cause systems have been developed as a generalisation of the conjunctive fork model, to account for cases whereby the observed correlation between two causally independent events cannot be explained through the action of a single common cause. The existence of such systems in suitable extensions of the assumed probability space was allegedly demonstrated by Hofer-Szab\'{o} and R\'{e}dei. This paper has shown that their proof is logically deficient: the admissibility condition (\ref{equation: adm ab}) is necessary, but not sufficient, for the screening-off condition (\ref{rccs 1}). Accordingly, we propose an alternative admissibility condition (\ref{equation: adm di*}), which provides a more straightforward representation of the screening-off condition, and we demonstrate that the resulting set of admissibility conditions overcomes their logical error.

%-------------------------------------------------------------------------------------------------------------------------------------------------------------------------------
%-------------------------------------------------------------------------------------------------------------------------------------------------------------------------------

%-------------------------------------------------------------------------------------------------------------------------------------------------------------------------------
%-------------------------------------------------------------------------------------------------------------------------------------------------------------------------------

\end{document}